\newtheorem{proposition}{Proposition}
\begin{document}


\title{Classification with Quantum Measurements}

\author{Fabio A. González}
\email{fagonzalezo@unal.edu.co}

\affiliation{MindLab Research Group, Departamento de Ingeniería de Sistemas e Industrial\\
 Universidad Nacional de Colombia, Bogotá, Colombia }
\author{Vladimir Vargas-Calderón}
\author{Herbert Vinck-Posada}

\affiliation{Grupo de Superconductividad y Nanotecnología, Departamento de Física\\
 Universidad Nacional de Colombia, Bogotá, Colombia}

\date{\today}

\begin{abstract}
This paper reports a novel method for supervised machine learning based on the mathematical formalism that supports quantum mechanics. The method uses projective quantum measurement as a way of building a prediction function. Specifically, the relationship between input and output variables is represented as the state of a bipartite quantum system. The state is estimated from training samples through an averaging process that produces a density matrix. Prediction of the label for a new sample is made by performing a projective measurement on the bipartite system with an operator, prepared from the new input sample, and applying a partial trace to obtain the state of the subsystem representing the output. The method can be seen as a generalization of Bayesian inference classification and as a type of kernel-based learning method. One remarkable characteristic of the method is that it does not require learning any parameters through optimization. We illustrate the method with different 2-D classification benchmark problems and different quantum information encodings.
\end{abstract}

\keywords{quantum computing, quantum machine learning}
\maketitle

\section{Introduction}

In recent years, the interest in the combination of quantum information processing and machine learning has been growing fueled by the increasing popularity and advances in both fields \cite{Biamonte2017QuantumLearning}. The field product of the intersection of these research areas is commonly denoted as quantum machine learning \cite{Schuld2015AnLearning}. The new field has produced a considerable amount of research work that explores different interactions between the two areas \cite{Perdomo-Ortiz2018OpportunitiesComputers}. 

The different approaches to quantum machine learning can be broadly classified into four categories depending on whether a classical or quantum system generates the data and whether the processing device is a classical computer or a quantum computer \cite{Schuld2018SupervisedComputers}. In the category of classical-data/quantum-processing, a large amount of work has been devoted to the development of quantum versions of different classical machine learning algorithms with an emphasis on showing an advantage, at least theoretically, of the quantum version in terms of speedup \cite{Schuld2018SupervisedComputers}. The classical-data/classical-processing category refers to the use of tools of quantum information research to formulate machine learning methods that take advantage of the quantum conceptual machinery. This category has been less explored than the former one and is the primary motivation of the work discussed in this paper.

This paper presents a classification method based on the mathematical formalism that supports quantum mechanics. The method can be implemented both as an algorithm for a classical computers and as a hybrid classical/quantum algorithm. The main idea of the method is to represent the joint probability of input and output variables, $P(x,y)$, as the state of a bipartite quantum system. Training corresponds to calculating this state from training samples. Prediction corresponds to performing a projective measurement with an operator, prepared from the new input sample to be classified, and subsequently calculating a partial trace to obtain the state of the output subsystem.

The representation of $P(x,y)$ as the state of a quantum system, more specifically as a density matrix, generalizes the classical probabilistic representation and enriches it with the additional representation capabilities of the quantum formalism. We show that the method generalizes Bayesian inference and can also be seen as a type of kernel learning method. Another remarkable feature of the proposed framework is that training, unlike many machine learning methods, does not require to optimize a cost function that depends on parameters. Instead, training is done merely by averaging quantum states representing training samples.

Different works have addressed the implementation of supervised learning models based on formalism from quantum mechanics or quantum information processing. \citet{Lloyd2013QuantumLearning} present a quantum algorithm for supervised cluster assignment based on calculating the minimum distance from a sample to the centroids representing clusters. Analogous quantum algorithms based on nearest-neighbor classification have been explored by \citet{Wiebe2015QuantumLearning, Sergioli2018AClassifier, Schuld2017ImplementingCircuit} among others. Different quantum versions of classical machine learning algorithms have been studied: support vector machines \citep{Anguita2003QuantumMachines, Rebentrost2014QuantumClassification, Li2015ExperimentalMachine}, decision trees~\citep{Lu2014QuantumClassifier}, classifier ensembles~\citep{Schuld2018QuantumClassifiers}, neural networks~\citep{Schuld2015SimulatingComputer, Schuld2014TheNetwork, Wiebe2016QuantumModels, Wan2017QuantumNetworks}, among others. Another line of work is the application of methods traditionally used for modeling quantum systems to supervised machine learning. Tensor networks, a tool for efficient modeling and simulation of many-body quantum systems, are the most representative of these methods~\citep{Stoudenmire2016SupervisedNetworks,Stoudenmire2018LearningNetworks} and have been applied to different classification problems including image classification~\citep{Klus2019Tensor-basedClassification}, language analysis~\citep{ Zhang2019ASpace} and probabilistic modeling~\citep{Glasser2019ExpressiveModeling}. With the exception of nearest-neighbor-based methods, all the other quantum machine learning methods rely on optimization to learn the parameters of the model. The method presented in this paper does not make use of optimization since learning is accomplished by averaging quantum states. The method can be seen as a form of kernel-based learning, but in contrast with typical kernel methods and nearest-neighbor learning, there is no need for storing any individual training sample to be used during prediction.

\section{Quantum Measurement Classification (QMC)}

The proposed method is similar in principle to a generative Bayesian inference approach \cite{ng2002discriminative}. Generative supervised learning models estimate the joint probability of inputs and outputs $P(x, y)$ that is used during the prediction stage to calculate the conditional probability of the output given a new sample. During training, QMC estimates the joint probability of inputs and outputs from training samples and represents it as a density matrix, $\rho_{\text{train}}$, that corresponds to a quantum state of a bipartite system $S_{\mathcal{XY}}=S_\mathcal{X}+S_\mathcal{Y}$. $S_\mathcal{X}$ is the subsystem representing the inputs with associated Hilbert space $\mathcal{H}_{\mathcal{X}}$ and the subsystem $S_\mathcal{Y}$ represents the outputs in the Hilbert space $\mathcal{H}_{\mathcal{Y}}$. Consequently, the representation space of the system $S_\mathcal{XY}$ is $\mathcal{H}_{\mathcal{X}} \otimes \mathcal{H}_{\mathcal{Y}}$. Prediction is made by performing a quantum measurement over $S_\mathcal{XY}$ with an operator specifically prepared from a new input sample.

\begin{figure}[!ht]
    \centering
    \includegraphics[width=\columnwidth]{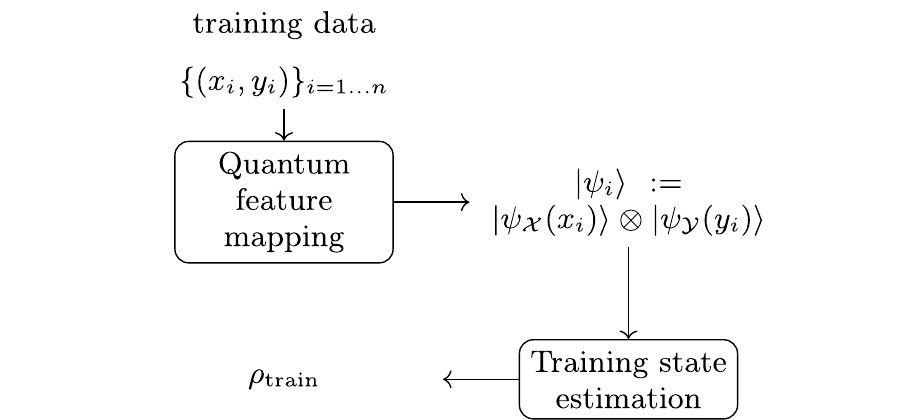}
    \caption{Training process: training samples are represented as quantum states of a bipartite system; states are averaged to calculate the training density matrix. }
    \label{fig:training}
\end{figure}

\Cref{fig:training} shows the training process that consists of two main steps, quantum feature mapping, and training state estimation:
\begin{enumerate}
    \item Quantum feature mapping. In this step each training sample is  mapped to $\mathcal{H}_{\mathcal{X}} \otimes \mathcal{H}_{\mathcal{Y}}$ using the following function:
    \begin{align}
    \begin{aligned}
        \psi: {\mathcal{X}} \times {\mathcal{Y}}& \to \mathcal{H}_{\mathcal{X}} \otimes \mathcal{H}_{\mathcal{Y}}  \\
         (x,y) &\mapsto \ket{\psi_{\mathcal{X}}(x)} \otimes \ket{\psi_{\mathcal{Y}}(y)},
    \end{aligned}
    \end{align}
    where $\psi_{\mathcal{X}}:{\mathcal{X}}\rightarrow \mathcal{H}_{\mathcal{X}}$ and $\psi_{\mathcal{Y}}:{\mathcal{Y}}\rightarrow \mathcal{H}_{\mathcal{Y}}$ are functions that map inputs and outputs, respectively, to the corresponding quantum Hilbert spaces. As a short-hand notation, every data sample $(x_i, y_i)\in T$ is mapped to the quantum feature space as $\psi:(x_i,y_i)\mapsto\ket{\psi_i}:= \ket{\psi_{\mathcal{X}}(x_i)}\otimes \ket{\psi_{\mathcal{Y}}(y_i)}$. Here, $T$ is a set of $n$ training samples.
    
    \item Training state estimation. In this step, we calculate a quantum state that summarizes the training data set. This state is represented by a density matrix $\rho_{\text{train}} $ There are three alternatives to calculate $\rho_{\text{train}}$:
    \begin{itemize}
        \item Pure state. In this case the training state corresponds to a superposition of the states representing training samples. First we calculate the superposition state $\ket{\psi_{\text{train}}}$ as:
        \begin{equation}\label{eq:psi-train-pure}
        \ket{\psi_{\text{train}}} = 
        \frac{\sum_{i=1}^n \ket{\psi_i}}{\norm{\sum_{i=1}^n \ket{\psi_i}}},    
        \end{equation}
        and define 
        \begin{equation}\label{eq:rho-train-pure}
        \rho_{\text{train}} = \ket{\psi_{\text{train}}}\bra{\psi_{\text{train}}}.    
        \end{equation}

        \item Mixed state. Here, $\rho_{\text{train}}$ corresponds to a mixture of the states corresponding to the training samples:
        \begin{equation}\label{eq:rho-train-mixed}
        \rho_{\text{train}} = 
        \frac{1}{n} \sum_{i=1}^n \ket{\psi_i} \bra{\psi_i}.
        \end{equation}
        
        \item Classical mixture. In this case we extract the probabilities associated with the quantum state $\ket{\psi_i}$ and use them to build a quantum state, represented by a density matrix, that only has classical uncertainty:
        
        \begin{equation}\label{eq:rho-train-classic}
        \rho_{\text{train}} = 
        \frac{1}{n} \sum_{i=1}^n \sum_{j=1}^m \abs{\braket{\psi_i}{j}}^2 \ket{j}\bra{j},
        \end{equation}
        
        where $m$ is the dimension of $\mathcal{H}_{\mathcal{X}} \otimes \mathcal{H}_{\mathcal{Y}}$ and $\ket{j}$ are the elements of the canonical basis.
    \end{itemize}
\end{enumerate}

The three alternatives to calculate the training state in step 2 correspond to three different ways of combining quantum and classical uncertainty in a quantum state \cite{jacobs2014quantum}. The pure state (\cref{eq:rho-train-pure}) encodes the training data set using only quantum uncertainty, the classical mixture (\cref{eq:rho-train-classic}) encodes the training samples using classical probabilities exclusively, and the mixed state (\cref{eq:rho-train-mixed}) uses a combination of quantum and classical uncertainty to encode the training samples in the training quantum state.

\begin{figure}[!ht]
    \centering
    \includegraphics[width=\columnwidth]{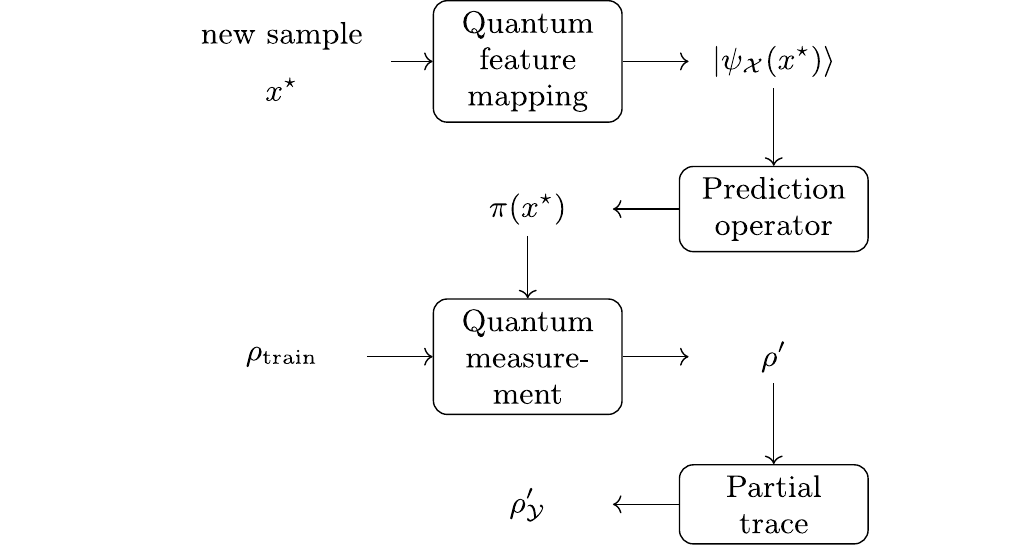}
    \caption{Prediction process: a new sample is represented as a quantum state; a projective measure operator is built from the input quantum state; the operator is applied to the training density matrix, a partial trace is used to calculate a density matrix of the $y$ subsystem, which represent the prediction.}
    \label{fig:prediction}
\end{figure}

The prediction process is depicted in \cref{fig:prediction}. The process receives as input a new sample $x^\star$ to be classified, and the training state $\rho_{\text{train}}$ from the training phase. The steps of the prediction process are described next.
\begin{enumerate}
    \item Quantum feature mapping. $x^\star$ is mapped to $\ket{\psi_{\mathcal{X}}(x^\star)}$.
    
    \item Prediction operator.  An operator acting on $\mathcal{H}_{\mathcal{X}} \otimes \mathcal{H}_{\mathcal{Y}}$ is defined as follows:
    \begin{equation}\label{eq:pred-operator}
    \pi(x^\star) = \ket{\psi_{\mathcal{X}}(x^\star)} \bra{\psi_{\mathcal{X}}(x^\star)} \otimes \text{Id}_{\mathcal{H}_{\mathcal{Y}}},
    \end{equation}
    where $\text{Id}_{\mathcal{H}_{\mathcal{Y}}}$ is the identity operator on $\mathcal{H}_{\mathcal{Y}}$. 

    \item Quantum measurement. The operator $\pi(x^\star)$ is applied to $\rho_{\text{train}}$:
    \begin{equation}\label{eq:measurement}
    \rho' = \frac{\pi(x^\star)\rho_{\text{train}}\pi(x^\star)}
    {\Tr[\pi(x^\star)\rho_{\text{train}}\pi(x^\star)]}
    \end{equation}
    
    \item Partial trace. The partial trace of $\rho'$ with respect to subsystem $S_{\mathcal{X}}$ is calculated. It corresponds to the reduced state of $\rho'$ on subsystem $S_{\mathcal{Y}}$:
    \begin{equation}\label{eq:partial-trace}
    \rho'_{{\mathcal{Y}}}=\Tr_{\mathcal{X}}[\rho']
    \end{equation}
\end{enumerate}

The density matrix $\rho'_{{\mathcal{Y}}}$ contains information about the state of the subsystem $S_{\mathcal{Y}}$ after the state of the subsystem $S_{\mathcal{X}}$ is projected onto the state $\ket{\psi_{\mathcal{X}}(x^\star)}$. This density matrix gives information about the probability of predictions. For instance if ${\mathcal{Y}}= \{y_k\}_{k=1\dots m}$ and $\psi_{\mathcal{Y}}$ corresponds to a one-hot or a probability encoding (see Subsection \ref{subsect:one-hot-encoding}), the diagonal element $\rho'_{{\mathcal{Y}}i,i}$ can be interpreted as the probability of the value $y_i$, i.e., the probability that $x^\star$ is labeled as $y_i$.

QMC not only resembles generative Bayesian inference, but it also generalizes it. The following proposition formally states this claim.

\begin{proposition}{}
\label{prop:bayesian}
Let $T=\{(x_i, y_i)\}_{i=1,\dots,n}$ be a set of training samples, $x^\star$ a sample to classify, with $x_i,x^\star \in \{1, \dots m\}$ and $y_i \in \{1, 2\}$. Let $\rho_{\text{train}}$ be the state calculated using the mixed state, \cref{eq:rho-train-mixed} or equivalently the classical mixture \cref{eq:rho-train-classic}, and a one-hot encoding feature map for both $x_i$ and $y_i$. Then the diagonal elements of the density matrix $\rho'_{{\mathcal{Y}}}$ calculated using \cref{eq:partial-trace} correspond to an estimation, using Bayesian inference, of the conditional probabilities $P(y=i|x^\star)$:

\begin{equation}
    \rho'_{{\mathcal{Y}}i,i} = P(y=i|x^\star) = \frac{P(x^\star|y=i)P(y=i)}{P(x^\star)},
\end{equation}

where $P(x^\star|y=i)$, $P(y=i)$ and $P(x^\star)$ are estimated from $T$.
\end{proposition}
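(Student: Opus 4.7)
The plan is to unpack every definition so that the one-hot assumption turns all the quantum machinery into plain counting over the training set, and then recognize the resulting ratios as empirical Bayesian probabilities. First I would fix notation: under one-hot encoding, $\ket{\psi_{\mathcal{X}}(x_i)}=\ket{x_i}$ and $\ket{\psi_{\mathcal{Y}}(y_i)}=\ket{y_i}$ are computational basis vectors, so $\ket{\psi_i}=\ket{x_i}\otimes\ket{y_i}$ and each $\ket{\psi_i}\bra{\psi_i}$ is already diagonal in the canonical basis. This immediately implies that \cref{eq:rho-train-mixed} and \cref{eq:rho-train-classic} give the same operator (every cross-term $\abs{\braket{\psi_i}{j}}^2$ reproduces the diagonal of $\ket{\psi_i}\bra{\psi_i}$), which disposes of the ``or equivalently'' clause in the statement.

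Next I would push $\pi(x^\star)$ through the sum in $\rho_{\text{train}}$ and use $\braket{x^\star}{x_i}=\delta_{x^\star,x_i}$ to get
\begin{equation*}
\pi(x^\star)\rho_{\text{train}}\pi(x^\star) \;=\; \frac{1}{n}\sum_{i:\,x_i=x^\star} \ket{x^\star,y_i}\bra{x^\star,y_i}.
\end{equation*}
Taking the trace then yields $n_{x^\star}/n$, where $n_{x^\star}$ is the number of training samples with $x_i=x^\star$; this is exactly the empirical estimator of $P(x^\star)$. After normalization (\cref{eq:measurement}), $\rho'$ is a uniform mixture over the basis states $\ket{x^\star,y_i}\bra{x^\star,y_i}$ indexed by the training points that match $x^\star$.

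Then I would apply the partial trace over $S_{\mathcal{X}}$ (\cref{eq:partial-trace}), which leaves
\begin{equation*}
\rho'_{\mathcal{Y}} \;=\; \frac{1}{n_{x^\star}}\sum_{i:\,x_i=x^\star}\ket{y_i}\bra{y_i},
\end{equation*}
so the diagonal entry $\rho'_{\mathcal{Y}\,i,i}$ is $n_{x^\star,i}/n_{x^\star}$, with $n_{x^\star,i}$ counting samples having both $x=x^\star$ and $y=i$. Finally I would show this is the Bayesian estimate by substituting the empirical plug-ins $\widehat P(x^\star\mid y=i)=n_{x^\star,i}/n_i$, $\widehat P(y=i)=n_i/n$, $\widehat P(x^\star)=n_{x^\star}/n$ into Bayes' rule and observing that the $n_i$ and $n$ factors cancel to give $n_{x^\star,i}/n_{x^\star}$.

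\noindent\textbf{Main obstacle.} There is no deep obstacle; the argument is essentially a bookkeeping exercise. The only care-points are (i) justifying cleanly that the mixed state and classical mixture coincide under one-hot encoding, so that a single calculation covers both cases stated in the proposition, and (ii) being explicit that $P(x^\star)$, $P(y=i)$ and $P(x^\star\mid y=i)$ refer to the empirical (maximum-likelihood) estimates from $T$, as the statement already says, so that the final identification with Bayes' formula is an equality of estimators rather than of true probabilities.
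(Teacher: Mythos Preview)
Your proposal is correct and follows essentially the same route as the paper's proof: both exploit that one-hot encoding makes every $\ket{\psi_i}\bra{\psi_i}$ diagonal in the canonical basis, push $\pi(x^\star)$ through the mixture to retain only the samples with $x_i=x^\star$, normalize, and take the partial trace. The only cosmetic difference is that the paper rewrites $\rho_{\text{train}}$ directly in terms of the empirical joint $P(x=j,y=k)$ before applying $\pi(x^\star)$, whereas you carry the counting indices $n_{x^\star}$, $n_{x^\star,i}$ and only translate to probabilities at the end; and you spell out more explicitly why the mixed state and classical mixture coincide and why the final ratio matches Bayes' rule with plug-in estimates, points the paper leaves implicit.
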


\begin{proof}
Since both $x_i$ and $y_i$ are represented using a one-hot encoding representation, then
\begin{align}
\ket{\psi_i} = \ket{x_i} \otimes \ket{y_i} = \ket{x_i y_i}.  
\end{align}

Applying \cref{eq:rho-train-mixed}:
\begin{align}\label{eq:rho-train-prop1}
\begin{aligned}
    \rho_{\text{train}} ={} &  \frac{1}{n} \sum_{i=1}^n \ket{x_i y_i} \bra{x_i y_i}  \\
    {}={} & \sum_{j=1}^m \sum_{k=1}^2 P(x=j, y=k) \ket{jk}\bra{jk},
\end{aligned}
\end{align}
with $
P(x=j, y=k) = \frac{1}{n}\sum_{i=1}^n \delta_{x_i j}  \delta_{y_i k}
$. Applying the prediction operator (\cref{eq:pred-operator})
\begin{align}
\pi(x^\star) = \ket{x^\star} \bra{x^\star} \otimes \text{Id}_{H_\mathcal{Y}}
\end{align}
to \cref{eq:rho-train-prop1} produces
\begin{align}\label{eq:rho-prime-prop1}
\begin{aligned}
    \rho' & = \frac{\sum_{k=1}^2 P(x=x^\star, y=k) \ket{x^\star k}\bra{x^\star k}}
    {\sum_{k=1}^2 P(x=x^\star, y=k)} \\
    & = \sum_{k=1}^2 P(y=k | x=x^\star) \ket{x^\star k}\bra{x^\star k}.
\end{aligned}
\end{align}
Finally, we calculate the partial trace of \cref{eq:rho-prime-prop1} to obtain:
\begin{align}
\begin{aligned}
    \rho'_{\mathcal{Y}} & = \text{Tr}_{\mathcal{X}} [\rho'] \\
    & = \sum_{k=1}^2 P(y=k | x=x^\star) \ket{k}\bra{k}.
\end{aligned}
\end{align}

\end{proof}

Using a one-hot encoding makes the general mixture (\cref{eq:rho-train-mixed}) equivalent to a classical mixture (\cref{eq:rho-train-classic}). Proposition \ref{prop:bayesian} states that  estimating the training state using a classical mixture is equivalent to do classical Bayesian inference. This is not surprising since the classical mixture corresponds to a conventional probabilistic encoding of the information in the training data set. 

When using the more general quantum feature map along with a mixed state (\cref{eq:rho-train-mixed}) to estimate the training quantum state, the prediction process involves more complex interactions between states. The following proposition shows that, in this case, the resulting density matrix $\rho'_{{\mathcal{Y}}}$ for the subsystem $S_{\mathcal{Y}}$ corresponds to a linear combination of the density matrices representing the output variables of the training samples.

\begin{proposition}{}
Let $T=\{(x_i, y_i)\}_{i=1\dots n}$ be a set of training samples, $x^\star$ a sample to classify, with $x_i,x^\star \in {\mathcal{X}}$ and $y_i \in {\mathcal{Y}}$. Let $\rho_{\text{train}}$ be the state calculated using a mixed state (\cref{eq:rho-train-mixed}) and quantum feature maps $\psi_{\mathcal{X}}$ and $\psi_{\mathcal{Y}}$. Then the density matrix $\rho'_{{\mathcal{Y}}}$, calculated with \cref{eq:partial-trace}, can be expressed as: 

\begin{equation}\label{eq:kernel-equivalence2}
    \rho'_{{\mathcal{Y}}} = \mathcal{M}\sum_{i=1}^n \abs{k(x^\star,x_i)}^2  \ket{\psi_{\mathcal{Y}}(y_i)} 
    \bra{\psi_{\mathcal{Y}}(y_i)},
\end{equation}

where $k(x^\star,x_i) = \braket{\psi_{\mathcal{X}}(x^\star)}{\psi_{\mathcal{X}}(x_i)}$ and $\mathcal{M}^{-1} = \Tr[\pi(x^\star)\rho_{\text{train}}\pi(x^\star)]$ .
\end{proposition}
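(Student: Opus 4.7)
The plan is to substitute the mixed-state form of $\rho_{\text{train}}$ into \cref{eq:measurement}, exploit the tensor-product structure of $\pi(x^\star)$ to factor each term, and then take the partial trace over $S_{\mathcal{X}}$. The key algebraic observation is that $\pi(x^\star)$ acts as a rank-one projector on the $\mathcal{X}$-factor and as the identity on the $\mathcal{Y}$-factor, so that
\begin{equation*}
    \pi(x^\star)\ket{\psi_i} = \braket{\psi_{\mathcal{X}}(x^\star)}{\psi_{\mathcal{X}}(x_i)}\, \ket{\psi_{\mathcal{X}}(x^\star)}\otimes\ket{\psi_{\mathcal{Y}}(y_i)} = k(x^\star,x_i)\, \ket{\psi_{\mathcal{X}}(x^\star)}\otimes\ket{\psi_{\mathcal{Y}}(y_i)}.
\end{equation*}
The complex kernel value $k(x^\star,x_i)$ thus drops out as a scalar overlap.

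First, I would expand $\pi(x^\star)\rho_{\text{train}}\pi(x^\star)$ term by term using \cref{eq:rho-train-mixed}. Each summand $\pi(x^\star)\ket{\psi_i}\bra{\psi_i}\pi(x^\star)$ factorizes into $\abs{k(x^\star,x_i)}^2\, \ket{\psi_{\mathcal{X}}(x^\star)}\bra{\psi_{\mathcal{X}}(x^\star)} \otimes \ket{\psi_{\mathcal{Y}}(y_i)}\bra{\psi_{\mathcal{Y}}(y_i)}$, because the scalar $k(x^\star,x_i)$ from the right action and its conjugate from the left action combine to give the squared modulus. Summing over $i=1,\dots,n$ yields an unnormalized state of the form $\ket{\psi_{\mathcal{X}}(x^\star)}\bra{\psi_{\mathcal{X}}(x^\star)}\otimes A$, where $A = \frac{1}{n}\sum_i \abs{k(x^\star,x_i)}^2 \ket{\psi_{\mathcal{Y}}(y_i)}\bra{\psi_{\mathcal{Y}}(y_i)}$.

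Next, I would apply the partial trace over $\mathcal{X}$. Using $\Tr_{\mathcal{X}}[\ket{\psi_{\mathcal{X}}(x^\star)}\bra{\psi_{\mathcal{X}}(x^\star)}\otimes A] = \braket{\psi_{\mathcal{X}}(x^\star)}{\psi_{\mathcal{X}}(x^\star)}\, A = A$ (assuming $\ket{\psi_{\mathcal{X}}(x^\star)}$ is normalized), the partial trace collapses the $\mathcal{X}$-factor and leaves precisely the kernel-weighted mixture of output projectors. Dividing by the denominator $\Tr[\pi(x^\star)\rho_{\text{train}}\pi(x^\star)]$ from \cref{eq:measurement} and identifying it with $\mathcal{M}^{-1}$ then produces the stated formula.

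The main obstacle is essentially bookkeeping with the normalization constant: the $1/n$ from the definition of $\rho_{\text{train}}$ must be absorbed into $\mathcal{M}$ together with the measurement-probability denominator so that the right-hand side of \cref{eq:kernel-equivalence2} matches the normalization convention used for $\mathcal{M}$. Structurally, however, the calculation is just the tensor-product factorization above followed by the partial trace, with no further subtlety.
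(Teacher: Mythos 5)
Your proposal is correct and follows essentially the same route as the paper's own proof: expand $\pi(x^\star)\rho_{\text{train}}\pi(x^\star)$ term by term using the product structure of each $\ket{\psi_i}$, obtain $\abs{k(x^\star,x_i)}^2\,\sigma_{\mathcal{X}}(x^\star)\otimes\sigma_{\mathcal{Y}}(y_i)$ for each summand, and trace out $S_{\mathcal{X}}$. Your remark about absorbing the $1/n$ into $\mathcal{M}$ together with the measurement denominator matches what the paper does in its proof, where $\mathcal{M}^{-1}=n\Tr[\pi(x^\star)\rho_{\text{train}}\pi(x^\star)]$.
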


\begin{proof}
\Cref{eq:rho-train-mixed} can be expressed as:

\begin{align}\label{eq:rho-train-prop2}
\begin{aligned}
    \rho_{\text{train}} & =
    \frac{1}{n} \sum_{i=1}^n \ket{\psi_i} \bra{\psi_i} \\
    & = \frac{1}{n} \sum_{i=1}^n \ket{\psi_{\mathcal{X}}(x_i)} \bra{\psi_{\mathcal{X}}(x_i)} \otimes \ket{\psi_{\mathcal{Y}}(y_i)} \bra{\psi_{\mathcal{Y}}(y_i)} \\
    & \equiv \frac{1}{n} \sum_{i=1}^n \sigma_{\mathcal{X}}(x_i) \otimes \sigma_{\mathcal{Y}}(y_i),
\end{aligned}
\end{align}
where $\sigma_{\mathcal{X}}(x_i) = \ket{\psi_{\mathcal{X}}(x_i)} \bra{\psi_{\mathcal{X}}(x_i)}$ and $\sigma_{\mathcal{Y}}(y_i) = \ket{\psi_{\mathcal{Y}}(y_i)} \bra{\psi_{\mathcal{Y}}(y_i)}$. 
Applying \cref{eq:measurement} to \cref{eq:rho-train-prop2} we get:

\begin{align}\label{eq:rho-prime-prop2}
\begin{aligned}
    \rho' & = \mathcal{M} \sum_{i=1}^n \sigma_{\mathcal{X}}(x^\star) \sigma_{\mathcal{X}}(x_i) \sigma_{\mathcal{X}}(x^\star) \otimes \sigma_{\mathcal{Y}}(y_i) \\
    & = \mathcal{M} \sum_{i=1}^n \abs{k(x^\star, x_i)}^2 \sigma_{\mathcal{X}}(x^\star) \otimes \sigma_{\mathcal{Y}}(y_i),
\end{aligned}
\end{align}
where $k(x^\star,x_i) = \braket{\psi_{\mathcal{X}}(x^\star)}{\psi_{\mathcal{X}}(x_i)}$ and $\mathcal{M}^{-1} = n \Tr[\pi(x^\star)\rho_{\text{train}}\pi(x^\star)]$. 

Finally, we calculate the partial trace of \cref{eq:rho-prime-prop2} to obtain:
\begin{align}
\begin{aligned}
    \rho'_{\mathcal{Y}} & = \text{Tr}_{\mathcal{X}} [\rho'] \\
    & = \mathcal{M} \sum_{i=1}^n k(x^\star, x_i)^2  \sigma_{\mathcal{Y}}(y_i) \\
    & = \mathcal{M}\sum_{i=1}^n k(x^\star,x_i)^2  \ket{\psi_{\mathcal{Y}}(y_i)} 
    \bra{\psi_{\mathcal{Y}}(y_i)}
\end{aligned}
\end{align}

\end{proof}

\Cref{eq:kernel-equivalence2} can be seen as type of kernel-based classification function $f(x) =  \sum_{(x_i, y_i) \in S} \alpha_i k(x, x_i) y_i$, where $k$ is a kernel function and the $\alpha_i$ are learned parameters. In QMC $\alpha_i:=k^*(x^\star, x_i)$ and $y_i$ is replaced by $\ket{\psi_{\mathcal{Y}}(y_i)} \bra{\psi_{\mathcal{Y}}(y_i)}$. Notice that $k(x^\star, x_i)$ corresponds to the dot product in the quantum Hilbert space $H_{\mathcal{X}}$, so it is in fact a kernel function. This means that QMC can be seen as type of kernel-based learning method. However an important difference is that while conventional kernel methods require to learn, usually through optimization, the $\alpha_i$ parameters, in QMC there are not parameters to be learned. Nevertheless, the method has hyperparameters, mainly associated to feature maps to be discussed in next section, that has to be estimated or fixed by the user. This is done through cross validation, the common practice in machine learning.

It is important to note that it is possible to use \cref{eq:kernel-equivalence2} to calculate $\rho'_y$, however this will require keeping all the training samples. QMC does not require this as training samples are compactly represented by the $\rho_\text{train}$ density matrix.

It is worth emphasizing that QMC can, in principle, be implemented in quantum devices through the preparation of a pure training state of the form \cref{eq:psi-train-pure} with well-known preparation protocols~\citep{Plesch2011Quantumstate,shende2006synthesis}. With the same protocol, the state of the new data sample $x^\star$ can be built. Finally, the projective measurement can be achieved via a third ancillary state that allows a SWAP test~\citep{Buhrman2001Fingerprinting,Garcia-Escartin2013SwapTest,Cincio_2018,PARK2020126422}, as in other distance-based classifiers~\citep{blank2019quantum,Schuld2017iop} (with the upside that data need not be radially separable, as well as keeping the size of the physical system to be independent from the amount of training data).

Assuming that the dimensions of $\mathcal{H}_{\mathcal{X}}$ and $\mathcal{H}_{\mathcal{Y}}$ are $k$ and $\ell$ respectively, a straighforward implementation of QMC training takes time $O(k\ell n)$. This means that training is linear on the training set size. Storing $\rho_{\text{train}}$ requires space $O((k\ell)^2)$. For prediction, the most costly process is calculating \cref{eq:measurement}. A direct implementation will take time $O((k\ell)^4)$, however a more careful implementation can perform it in time $O((k\ell)^3)$. This can be further reduced by performing a previous eigendecomposition of $\rho_{\text{train}}$.

\section{Quantum feature maps}
In quantum machine learning literature, there are several approaches to represent data as quantum states. \citet{Schuld2018SupervisedComputers} propose different strategies such as basis encoding, encoding data directly as qubits, and amplitude encoding, which is encoding data in the amplitude of quantum states~\citep{Schuld2017iop}. Next, we describe several approaches that we use in the illustrative examples.

\subsection{Softmax encoding}
A common approach to represent real numbers in the interval $[0,1]$ is to use the mapping $\phi:x \mapsto \sin{(2\pi x)}\ket{0} + \cos{(2\pi x)}\ket{1}$, encoding the number as a the superposed state of a qubit. We propose a softmax quantum encoding that extends this approach from two dimensions to multiple dimensions.

First we define a probability encoding for real values $P:\mathbb{R}\rightarrow [0,1]^m$ where $m$ is the number of states:
\begin{align} 
    P_i(x)=\left(\frac{\exp{-\beta\|x-\alpha_i\|^2}}{\sum_{j=1}^m \exp{-\beta\|x-\alpha_j\|^2}}\right)_{i=1\dots m},
\end{align}
where $\alpha_i = \frac{i-1}{m-1}$.
Using these probabilities we build a quantum state representing a real number
\begin{align}
\ket{\varphi(x)} = \sum_{j=1}^m \sqrt{P_i(x)}\ket{j}.
\end{align}
The quantum state corresponding to a vector $x_i \in \mathbb{R}^n$ is defined as
\begin{align}
\ket{\psi_{\mathcal{X}}(x_i)} = \ket{\varphi(x_{i,1})} \otimes \dots \otimes \ket{\varphi(x_{i,n})}.
\end{align}

\subsection{One-hot encoding} \label{subsect:one-hot-encoding}
This representation corresponds to a basis encoding for discrete variables with $m$ possible values, $\mathcal{X}=\{1, \dots, m\}$. The encoding for $x=j$ is given by
\begin{align}
\psi_{\mathcal{X}}(j) = \ket{j}. 
\end{align}

\subsection{Squeezed states}

Recently, \citet{Schuld2019QuantumSpaces} proposed to encode data to the phase of a light squeezed state
\begin{align}
    \ket{(r,\varphi)} = \frac{1}{\sqrt{\cosh(r)}}\sum_{n=0}^\infty \frac{\sqrt{(2n)!}}{2^n n!}(e^{i(\varphi + \pi)}\tanh(r))^n\ket{2n},
\end{align}
so that a vector $x_i\in [0, \pi]^n$ is mapped to $\psi_{\mathcal{X}}(c, x_i) = \ket{(c, x_{i,1})}\otimes \ldots \otimes \ket{(c, x_{i,n})}$.

\subsection{Coherent states}

Data can also be encoded into the average number of photons of a canonical coherent state \cite{Chatterjee2017coherent}:

\begin{align}
    \ket{(\alpha, \gamma)} = e^{-\frac{\gamma|\alpha|^2}{2}}\sum_{n=0}^\infty \frac{\alpha^n\gamma^{n/2}}{\sqrt{n!}}\ket{n}\label{eq:coherent_state}
\end{align}

where a scaling characterized by $\gamma$ has been introduced so that the dot product of the two states corresponds to a Gaussian kernel with $\gamma$ parameter. The mapping from a real data sample $x_j\in\mathbb{R}^n$ to the complex $\alpha$ is done as follows. An auxiliary variable $\theta_j$ is built through a min-max scaling of the data set to $[0,\pi]^n$, so that $x_{j,\ell}\mapsto x_{j,\ell} e^{i\theta_{j,\ell}}$. Therefore, a data point $x_j$ is mapped to the quantum feature space through
\begin{align}
    \psi_{\mathcal{X}}(x_j,\gamma) = \ket{(x_{j,1}e^{i\theta_{j,1}},\gamma)}\otimes\ldots\otimes \ket{(x_{j,n}e^{i\theta_{j,n}},\gamma)}
\end{align}
which induces a kernel
\begin{align}
    \abs{k_{\gamma}(x_k, x_j)}^2 = \prod_{\ell=1}^n \exp(-\gamma|x_{k,\ell}e^{i\theta_{k,\ell}} - x_{j,\ell}e^{i\theta_{j,\ell}} |^2),
\end{align}
where the argument of the exponential is explicitly $-\gamma(x_{k,\ell}^2 + x_{j,\ell}^2 - 2 x_{k,\ell} x_{j,\ell} \cos(\theta_{k,\ell} - \theta_{j,\ell}))$, which imposes a higher distance penalty in the feature space for distant data points in the original space $\mathbb{R}^n$ than the usual Gaussian kernel.

\subsection{Random Fourier Features}
As in quantum state representations, the feature space of kernel methods is a Hilbert space. This means that a quantum feature map implicitly defines a kernel. A natural question is whether the opposite conversion also works, i.e., given a particular kernel function, can we find a quantum feature map such that the inner product of the corresponding Hilbert space corresponds to the kernel. In general, the answer is no; however, it is possible to find an approximation for certain kernels. Random Fourier features (RFF) \cite{rahimi2008random} provides a technique that finds an explicit Hilbert space such that the inner product in this space approximates a shift-invariant kernel. Specifically, for a given kernel $k:\mathbb{R}^d \times \mathbb{R}^d \rightarrow \mathbb{R}$, RFF finds a map $z:\mathbb{R}^d \rightarrow \mathbb{R}^D$ such that $k(x,y) \approx z'(x)z(y)$.

The quantum state corresponding to a vector $x_i \in \mathbb{R}^d$ is defined as
\begin{align}
\ket{\psi_{\mathcal{X}}(x_i)} = \frac{1}{||z(x_i)||}\sum_{j=1}^D z_j(x_i)\ket{j}.
\end{align}

\section{Method illustration}

\begin{figure}[!ht]
    \centering
    \includegraphics[width=\columnwidth]{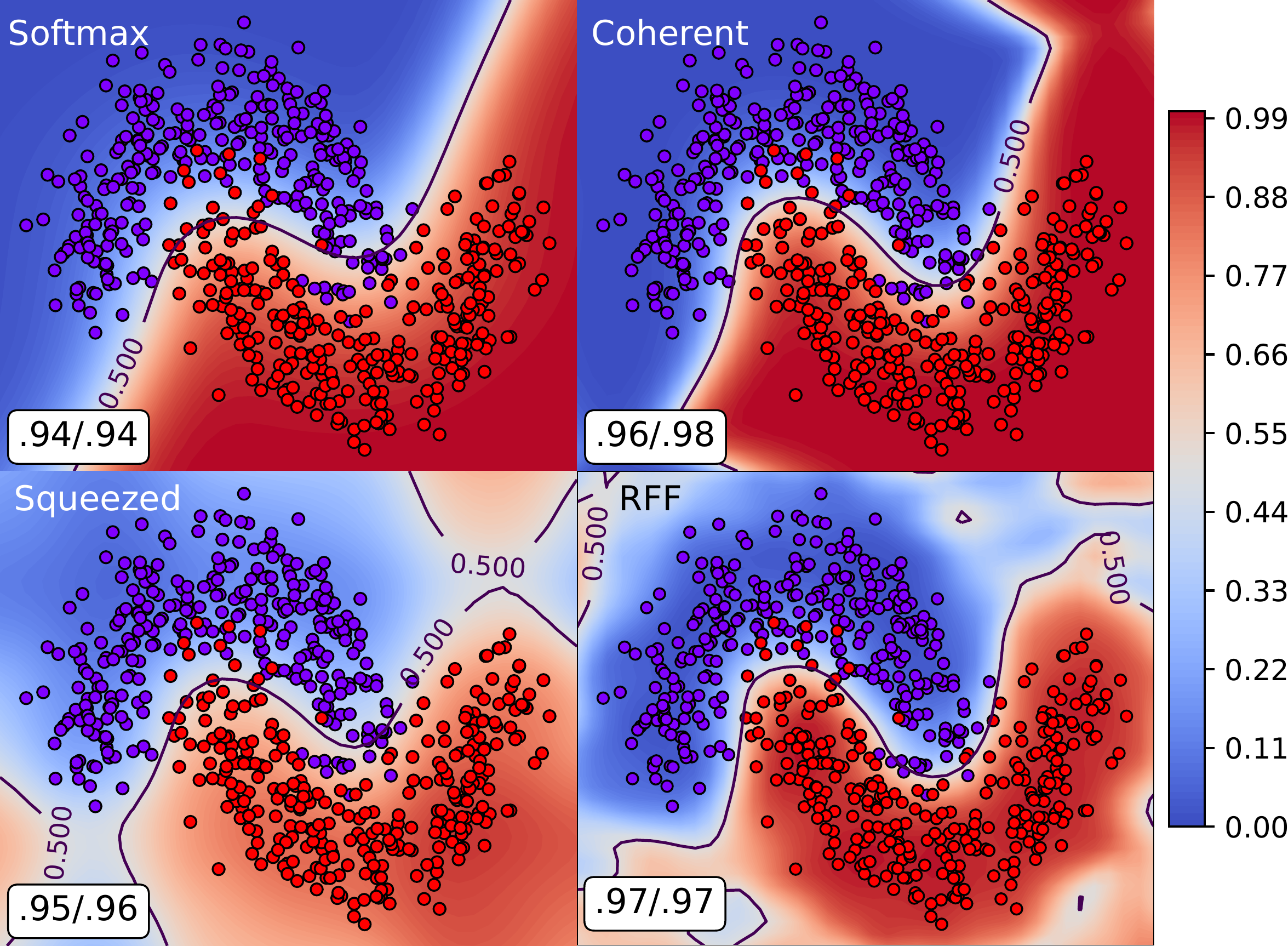}
    \caption{Decision heatmap for a two-moons dataset of the softmax, coherent, squeezed, and random Fourier features states-based feature maps truncated at 20 Fock states. The regularization parameters were $\beta=70$ for the softmax state, $\gamma=70$ for the coherent state, $r=2.5$ for the squeezed state, and $\gamma=20$ for the random Fourier features state. In all four cases, a mixed training state was used. The white boxes show the train/test accuracy of the classifier. The values of the parameters were tuned using cross-validation.}

    \label{fig:comparison_mixed}
\end{figure}

In this section, we illustrate the performance of QMC with binary classification toy problems for the different aforementioned feature maps. \Cref{fig:comparison_mixed} compares the decision boundary obtained from a mixed training state through the different feature maps, where the states are truncated up to the first 20 Fock states for each input feature. The color tells the probability that the output state belongs to the red or blue classes. For all the cases, the method achieves high discrimination in both classes: 94\%, 98\%, 96\% and 97\% accuracy in the test set for the softmax, coherent, squeezed and RFF encodings, respectively.

\begin{figure}[!hb]
    \centering
    \includegraphics[width=\columnwidth]{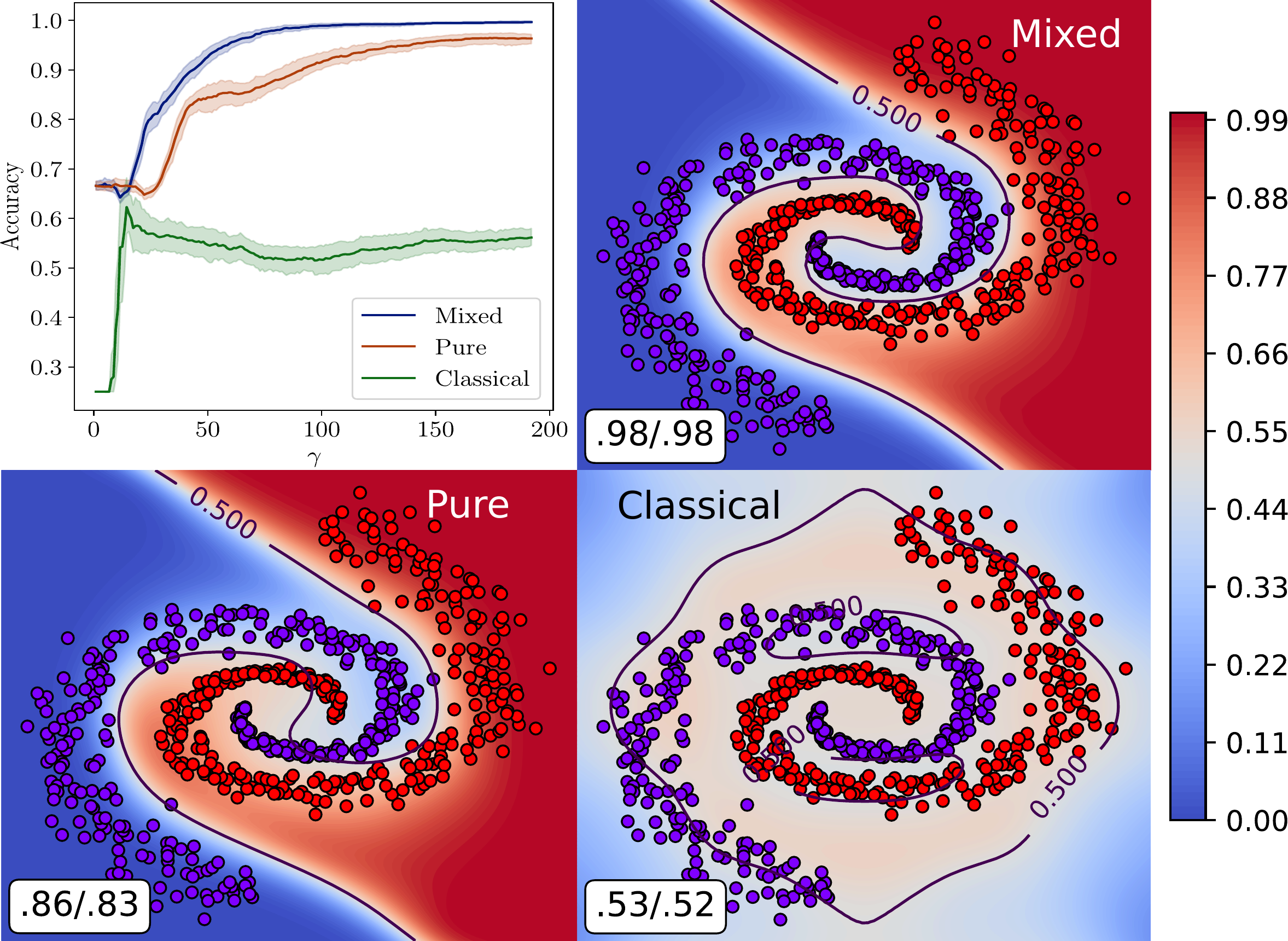}
    \caption{Classification accuracy as a function of the regularization parameter $\gamma$ on a spirals dataset for mixed, pure and classical training states built with the coherent state encoding truncated at 32 Fock states. Decision heatmaps are shown for the three training schemes at $\gamma=70$. The white boxes are as in \cref{fig:comparison_mixed}.}
    \label{fig:coherent_spirals}
\end{figure}

\Cref{eq:rho-train-pure,eq:rho-train-mixed,eq:rho-train-classic} correspond to three different alternatives to estimate the training state. The mixed and pure alternatives are expected to take advantage of the quantum correlations induced by the feature mapping and exploited in the projective measurement process. \Cref{fig:coherent_spirals} shows that this is, in fact, the case for the 2D spirals dataset. The three plots show prediction regions for the three different estimation strategies using the coherent state quantum feature mapping truncated to the first 32 Fock states. The mixed state representation has the best performance (98\% accuracy in the test set), followed closely by the pure state representation (83\% accuracy in the test set). Both are able to capture the particular shape of both classes. The classical state representation fails to do good discrimination. This is better observed on the top-left plot where the classification precision is measured for a range of $\gamma$ values\footnote{The values of $\gamma$, $r$ and $\beta$ have a similar effect on classification accuracy: the larger the value, the larger the discrimination. This means that regions are assigned large probabilities of belonging to one class or the other. This property is good for separable data, but can lead to overfitting for non-separable data.}, showing that the worst classification scheme is the classical representation state, whereas the best one is the mixed representation state, closely followed by the pure representation state. The same behavior is observed when the coherent state is truncated to the first 20--64 Fock states (not shown).

Regarding the squeezed state encoding, decision boundaries for a circles dataset for pure and mixed training states are shown in \cref{fig:squeezed_comparison}. Here, the mixed training state outperforms the pure training state. The classical training state is useless because the data is mapped to the phase of the squeezed state, and the probabilities in \cref{eq:rho-train-classic} do not depend on this phase. Again, a signature of the kernel induced by the squeezed state is seen in the regions at the middle top, bottom, right, and left parts of the decision heatmaps that are wrongly classified. These regions emerge from the fact that the similarity induced by the squeezed-based kernel between two points at a fixed Euclidean distance is maximum if the horizontal or vertical components of the two points are the same.

\begin{figure}[!t]
    \centering
    \includegraphics[width=\columnwidth]{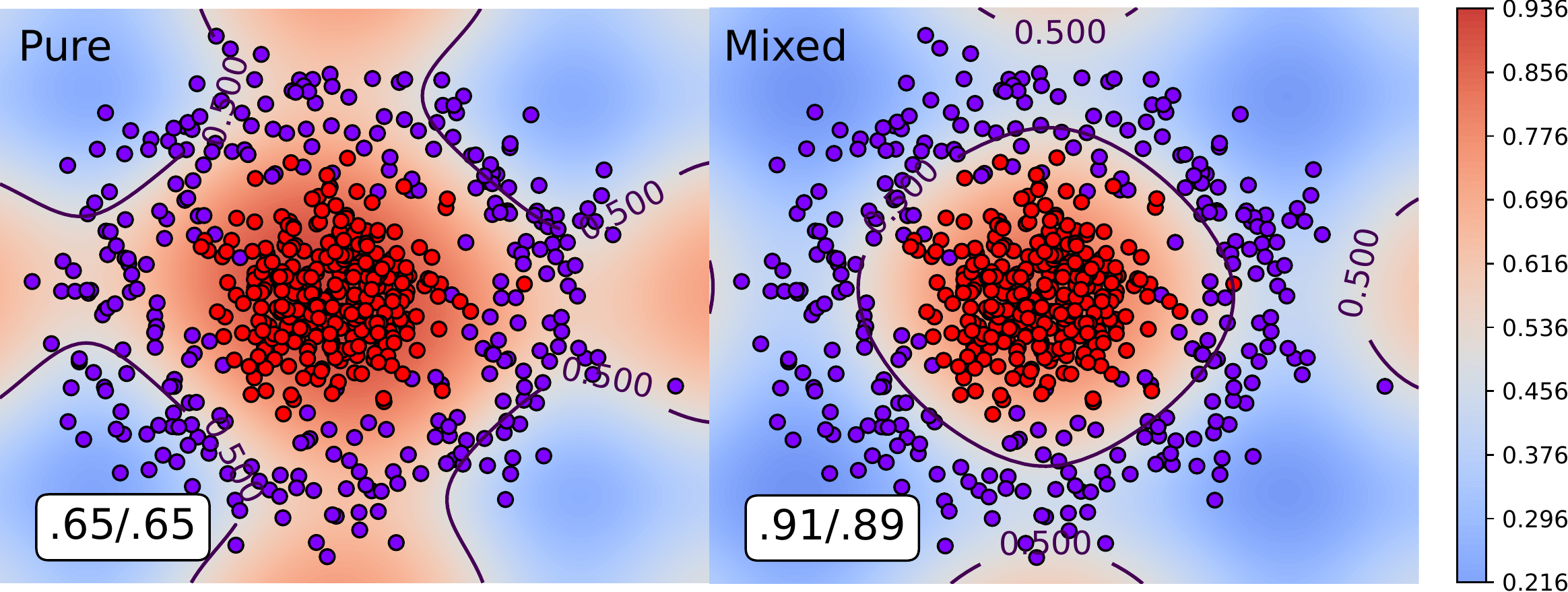}
    \caption{Decision heatmap for a two-circles dataset of the squeezed state feature map with pure and mixed training states. Squeezed states were truncated to the first 10 contributing Fock states, and a value of $r=2.5$ was used, as in \citep{Schuld2019QuantumSpaces}. The white boxes are as in \cref{fig:comparison_mixed}.}
    \label{fig:squeezed_comparison}
\end{figure}

The accuracies obtained by every quantum feature map on each of the discussed toy-datasets are shown in~\cref{tab:all_results}. The coherent quantum feature map shows similar results on all datasets, but random features and softmax quantum feature maps show lower performance in the spirals dataset. Interestingly, the squeezed state, whose related kernel is anisotropic~\citep{Schuld2019QuantumSpaces} performs much better in the spirals dataset.

\begin{table}[!hb]
\centering
\caption{Train/test accuracy for every quantum feature map and toy-dataset discussed in this paper for mixed training states. The number of Fock states considered for each dataset was 10 for circles, 20 for moons and 32 for spirals. The specific parameters for each quantum feature map were $r=2.5$ for squeezed states, $\beta=70$ for softmax states, $\gamma=20$ for random feature map states, and $\gamma=70$ for coherent states.}
\label{tab:all_results}
\begin{tabular}{lllll}
\toprule
QFM &   Coherent &        RFF &    Softmax &   Squeezed \\
Dataset &            &            &            &            \\
\colrule
Circles &  0.96/0.94 &  0.88/0.87 &  0.94/0.93 &  0.91/0.89 \\
Moons   &  0.96/0.98 &  0.95/0.97 &  0.94/0.94 &  0.95/0.96 \\
Spirals &  0.98/0.98 &  0.82/0.75 &  0.85/0.83 &   1.00/0.99 \\
\botrule
\end{tabular}
\end{table}

\section{Conclusions}

This paper presented a classification method based on quantum measurement. The overall strategy of the method is based on two mechanisms: first, to represent the joint probability of inputs and outputs by the state of a bipartite quantum system and, second, to predict the outputs of new input samples performing a quantum measurement. 

Using this quantum measurement framework as a basis for function induction contributes a two-fold novel perspective to supervised quantum machine learning. On the one hand, the training process does not require optimization of parameters, since training corresponds to state averaging. This is an essential departure from current machine learning models, both classical and quantum-based. On the other hand, the classification model induced by QMC can be understood as a generalization of Bayesian-inference classification and as a type of kernel classification model. This connection is a consequence of the harmonious combination of linear algebra and probability provided by the quantum framework. Some works connect kernel and probabilistic methods \cite{jaakkola1999exploiting, muandet2017kernel}; however the quantum measurement framework constitutes a novel unifying perspective.

The ability of QMC of inducing a classification model without parameter optimization suggests the possibility of an efficient classical implementation. This is the case for the training process, whose time complexity is linear on the number of training samples. However, the computational burden moves from the training process to the prediction process and from time complexity to space complexity. In particular the space required by the training density matrix, $\rho_\text{train}$, is $\mathcal{O}(m^2\ell^2)$ where $m=\abs{\mathcal{H}_\mathcal{X}}$ and $\ell=\abs{\mathcal{H}_\mathcal{Y}}$. Scaling QMC to large scale learning problems requires dealing with this space complexity. A promising research line to address this problem is to use tensor networks~\citep{Stoudenmire2016SupervisedNetworks} to build a compact representation of these density matrices employing tensor factorizations. Another possibility of mitigating the computational costs is the implementation of QMC as a hybrid classical-quantum algorithm. Whether this could exploit a quantum advantage is part of our future research.

\bibliography{references,refs}

\end{document}